\documentclass[
  10pt,
  pra,
  showpacs,
  preprintnumbers,
  twocolumn]
 {revtex4-1}
\usepackage{inputenc}
\usepackage[T1]{fontenc}
\usepackage{lmodern}
\usepackage{amsmath,amsfonts,amssymb,amstext,amscd,amsthm,dsfont}
\usepackage[mathscr]{eucal}
\usepackage{mathtools}
\usepackage{calrsfs,enumerate}
\usepackage[normalem]{ulem}
\usepackage{graphicx}
\usepackage{makeidx}
\usepackage{bbold}
\usepackage{blindtext}
\usepackage{qcircuitMP}
\usepackage{subcaption}
\usepackage[colorinlistoftodos]{todonotes}



\newcommand\ket[1]{\left|#1\right>}

\newcommand\projj[2]{\left|#1\right>\!\left<#2\right|}
\newcommand\proj[1]{\left|#1\right>\!\left<#1\right|}

\DeclareMathOperator{\Tr}{\operatorname{Tr}}
\newcommand{\deff}{\vcentcolon=}

\newcommand{\vspan}{\operatorname{span}}
\newcommand{\rab}{\rho_{AB}}

\newcommand{\ra}{\rho_{A}}

\newcommand{\id}{\textup{id}}

\newcommand{\I}{\openone}
\newcommand{\SR}{\textup{SR}}
\newcommand{\OSR}{\textup{OSR}}
\newcommand{\OSD}{\textup{OSD}}



\newcommand{\reff}[1]{Eq.~\eqref{#1}}


\newcommand*{\cD}{\mathcal{D}}

\newcommand*{\cH}{\mathcal{H}}

\newcommand*{\cL}{\mathcal{L}}


\newtheorem{Theorem}{Theorem}

\newtheorem{Lemma}{Lemma}


\begin{document}
\title{Correlation-assisted process tomography}   
\author{Matteo Caiaffa, Marco Piani}
\affiliation{SUPA and Department of Physics, University of Strathclyde, Glasgow G4 0NG, UK}
\begin{abstract}
Standard quantum process tomography on a $d$-dimensional input
is performed by 
preparing several states of an input probe that then evolve under the action of the quantum channel corresponding to the progress. The final states of the probe are reconstructed by means of state tomography. An alternative is offered by ancilla-assisted process tomography: a single probe-ancilla state is used, and the correlations existing between probe and ancilla are exploited to fully reconstruct the information on the channel. In order for ancilla-assisted process tomography to be possible, the probe-ancilla input state does not need to be entangled, but still needs to have maximal operator Schmidt rank. Here we establish and analyze a framework for process tomography that interpolates between these two methods, aiming at exploiting any correlations that may exist between probe and ancilla to allow process tomography with as few input preparations as possible, when the probe-ancilla state may be operator-Schmidt-rank deficient. The main object of our investigation is the minimal number of initial local operations on the input probe for a given starting probe-ancilla state that are needed to allow process tomography. We prove that such a number scales inversely proportional to the operator Schmidt rank of the input probe-ancilla state for arbitrary local input processing. We also provide results for the case where this initial local processing is restricted to be a unitary rotation, in particular showing that the mentioned scaling is satisfied for pure input entangled states, and that in the case of mixed states there might be extreme cases where the fixed input probe-ancilla state provides information on a $\approx d^2/2$-dimensional subspace of the operator space and a single additional input unitary rotation allows process tomography. 
\end{abstract}
\pacs{---}
\maketitle

\noindent
\section{Introduction}
Quantum information science promises the realisation of impressive advances in the field of computation and communication, as well as the advent of other novel quantum technologies \cite{nielsen2010quantum}. The variety and potential usefulness of such applications have made quantum information one of the most prolific physics research areas of the last decades. One critical element in accomplishing said progress is the ability to completely and accurately characterize quantum physical processes, a task known as quantum process (or channel) tomography (QPT). In the extensive literature upon the subject, one recognises two major, extreme examples of QPT: standard quantum process tomography (SPT) and ancilla-assisted quantum process tomography (AAPT). In the general case of SPT, an unknown  quantum channel $\Lambda$ acting on a $d$-level system (also called a qudit) can be reconstructed through its action on an ensemble of linearly independent input states \cite{nielsen2010quantum,  chuang1997prescription, poyatos1997complete}. In particular, a probe is prepared in a fixed set of $d^2$ input states $\{\rho_i\}$, which form a basis for the space of qudit linear operators. Each of the $\rho_i$ states goes through the process $\Lambda$ to be characterized, and the outputs $\Lambda[\rho_i]$ are determined using quantum state tomography \cite{nielsen2010quantum,d2003quantum,bisio2009optimal} (see Figure~\ref{fig:SPT}).
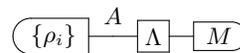
\begin{figure}[b]
\[
\Qcircuit @C=1em @R=.7em @! R {    
 & \prepareC{\{\rho_{i}\}}	& \ustick{A}\qw & \gate{\Lambda}& \measureD{M}
 }
\]
\caption{Standard quantum process tomography. To reconstruct the action of a channel $\Lambda$ acting on a $d$-dimensional system $A$, $d^2$ linearly independent input states $\{\rho_i\}$ are needed, with state tomography done on the outputs by means of measurement(s) $M$. Time goes from left to right. Single lines represent quantum systems, and boxes represent operations: a square box has quantum input and quantum output, while a D-shaped (reverse-D-shaped) box has only quantum input (quantum output).}
\label{fig:SPT}
\end{figure}
Once the outputs are known, the evolution under $\Lambda$ of any arbitrary operator can be determined uniquely by linearity , thus characterizing the channel. 
An alternative tomographic technique is offered by the renowned AAPT \cite{d2001quantum, altepeter2003ancilla, d2003imprinting} which, in contrast to SPT, needs only one single bipartite input state. That this is possible can be seen as a consequence of the correspondence between linear maps and linear operators established by the well-known  Choi-Jamio{\l}kowski isomorphism~\cite{choi1975completely,jamiolkowski1972linear}. In general, an ancillary system $B$ is prepared in a correlated state $\rho_{AB}$ with the quantum system subject to the channel to be determined, the probe $A$.
Complete information about the channel can be imprinted on the global state by the action of the process on the probe alone, and then extracted by state tomography on the bipartite output state (see Figure~\ref{fig:AAPT}).
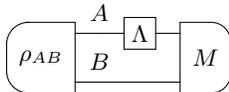
\begin{figure}
\[
\Qcircuit @C=1em @R=.7em @! R {    
 & \multiprepareC{1}{\rho_{AB}}	& \ustick{A}\qw			& \gate{\Lambda}	& \multimeasureD{1}{M}\\
 & \pureghost{\rho_{AB}}  		& \ustick{B}\qw			& \qw			& \ghost{M}		}	
\]
\caption{Ancilla-assisted quantum process tomography. One input state $\rho_{AB}$ suffices, as long as it has Operator Schmidt Rank equal to the square of the input dimension of the channel.}
\label{fig:AAPT}
\end{figure}

An input enabling enabling AAPT is the maximally entangled state $\ket{\Phi}_{AB}=\sum_{i=1}^dd^{-1/2}\ket{i}\otimes\ket{i}$, with the output $\rho_{\Lambda}= (\Lambda\otimes\id)[\proj{\Phi}_{AB}]$ simply being the Choi-Jamio{\l}kowski state isomorphic to $\Lambda$ \cite{choi1975completely,jamiolkowski1972linear,d2001quantum}. 
However, it was observed \cite{altepeter2003ancilla, d2003imprinting} that the key property for a bipartite input to enable AAPT is that of having maximal Operator Schmidt Rank (OSR; to be defined later), with a refining of this observation being that the channel discrimination power of a bipartite state is dictated by its smallest operator Schmidt coefficient~\cite{caiaffa2018}. It follows that, in principle, also non-entangled but correlated states can be used to perform AAPT. Bipartite states carrying a complete imprinting of a channel acting on one of the two subsystems were defined as \emph{faithful} in Ref.~\cite{d2003imprinting}. Nonetheless, non-faithful states can still be used to obtain substantial albeit partial information on the action of a channel. This observation suggests that the property of being faithful can be associated with a \emph{set} of bipartite states, the latter being faithful when any unknown channel can be fully retrieved from the tomographic reconstruction of the corresponding output states~\cite{d2003imprinting}. Indeed, SPT can be seen as an extreme case of such a situation, where the presence of an ancilla is actually irrelevant, and one just uses a faithful set of probe states. We remark that the correlations present in one or more of the bipartite states of a faithful set can be deemed as effectively assisting process tomography as long as the faithful set comprises less than $d^2$ states. The results in this paper lie between the two archetypical techniques sketched above, and focus on the exploitation of correlations to reduce the number of distinct inputs needed for what we could call in general \emph{correlation-assisted process tomography} (CAPT).

More specifically, we focus on the question of whether and how a faithful set can be generated by means of local actions $\{\Gamma_i\}$ on a fixed input state (see Figure~\ref{fig:capt}).
\begin{figure}[b]
\[
\Qcircuit @C=1em @R=.7em @! R {    
 & \multiprepareC{1}{\rho_{AB}}	& \ustick{A}\qw			& \gate{\{\Gamma_i\}}	& \gate{\Lambda}	& \multimeasureD{1}{M}\\
 & \pureghost{\rho_{AB}}  		& \ustick{B}\qw			& \qw			& \qw			& \ghost{M}		}	
\]
\caption{Correlation-assisted quantum process tomography. Any bipartite state $\rho_{AB}$ can be used in this scheme. The presence of correlations in the state may substantially reduce the number of known channels $\{\Gamma_i\}$ that need to be applied so that $\{\Gamma_{i,A}[\rho_{AB}]\}$ is a faithful set. Standard process tomography and ancilla-assisted process tomography are extreme cases of this more general scenario.}
\label{fig:capt}
\end{figure}
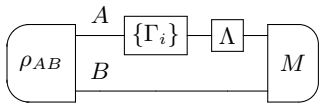
In the case where there is no ancilla (or, if there is an ancilla, where there are no probe-ancilla correlations), a local action is not very different from simply considering $d^2$ inputs, but one may need strictly less than $d^2$ local operations on the input if correlations are present between probe and ancilla. That is, our results may be interpreted as an interpolation between the use of fully uncorrelated or fully correlated (that is, having maximum OSR) input states.

We show in general that a faithful set can always be generated via $\left\lceil\frac{d^2}{\OSR(\rho_{AB})}\right\rceil$ local transformations on a \emph{fixed bipartite state}. Notice that this is optimal, as it is clearly impossible to generate a faithful set with less local operations. We also consider the case where such local transformations are constrained to be unitary. 
For pure fixed states, we find that such a constraint does not change the result: any pure bipartite state of Schmidt rank $k$ (hence with OSR equal to $k^2$) can be used to generate a faithful set with 
$\left\lceil\frac{d^2}{k^2}\right\rceil$ unitaries. 
For mixed states, the constraint can actually be limiting: we exhibit a class of qudit-qudit states with OSR equal to two but such that one still needs $d^2$ local unitaries to generate a faithful set. We conjecture that in general one may need $\left\lceil\frac{d^2}{\OSR(\rho_{AB})-1}\right\rceil$ local unitaries to generate a faithful set. On the other hand, the mixed-state case can display a highly ``efficient'' (in terms of local unitaries employed) generation of a faithful set; specifically, we exhibit a family of qudit-qudit states with OSR $\approx d^2/2$ where only two unitaries are needed to achieve faithfulness; notice that such a case is impossible in the pure-state case. Finally, by exploiting the relation between $SO(3)$ and $SU(2)$ (that is, in a sense, the Bloch ball qubit representation), we fully characterize the qubit-qudit case for qubit channels, once more highlighting the importance of discord in the issue of correlation-assisted channel tomography/discrimination: a two-qubit state gives rise to a faithful set with at most two local unitaries if an only if it exhibits discord on the probe side.

\noindent
\section{Notation and preliminaries}%
We will focus on finite-dimensional systems. Any Hilbert space $\cH$ will be equivalent to $\mathbb{C}^d$, for some dimension $d$. We will indicate by $\cL(\cH)$, the set of linear operators $L$ (equivalent to matrices) on $\cH$. The dimension of a system $X$ with Hilbert space $\cH_X$ will be indicated by $d_X$. A quantum state on $\cH$ is a density operator $\rho$ belonging to the convex subset $\cD(\cH)\subset \cL(\cH)$ of positive semidefinite operators with unit trace. The trace operation is indicated by $\Tr$, while we denote by $\Tr_X$ the partial trace on system $X$.
We write $\rho_X$ for the (reduced) state of system $X$.

A bipartite state $\rho_{AB}$ is unentangled (or separable)
if it is the convex combination of product (or uncorrelated) states~\cite{revent}, that is, if
$\rho_{AB} = \sum_i p_i \rho^A_i \otimes \rho^B_i$, 
with $\{p_i\}$ a probability distribution. A state is considered to be entangled if and only if it is not separable. 

Any physical evolution (or quantum process) affecting a quantum system is described in terms of quantum channels~\cite{nielsen2010quantum}. 
A quantum channel from $X$ to $Y$ is a completely-positive trace-preserving linear map $\Lambda$ from $\cL(\cH_X)$ to $\cL(\cH_Y)$.

We will also make use of some specific norms on operators, in particular we will use (some of) the $p$-norms
\[
\|L\|_p = (\sum_i \sigma_i(L)^p)^{1/p},
\]
where the sum is over the singular values $\sigma_i(L)$ of $L$. The trace norm $\|L\|_1=\sum_i \sigma_i(L)$ corresponds to $p=1$; the Frobenius norm $\|L\|_p = \sqrt{\sum_i \sigma_i(L)^2}$ corresponds to $p=2$; finally, for $p\rightarrow \infty$, we have the operator norm $\|L\|_\infty=\max_i \sigma_i(L)$.

For simplicity, we will typically consider bipartite states $\rab$ with the two subsystems of the same dimension, and we define $d\deff d_A=d_B$, keeping in mind that the results and terminology (e.g., ``maximal OSR'') presented in this paper are generally valid as long as $d_A\leq d_B$.

\section{Operator Schmidt decomposition}

The state vector $\ket{\psi}_{AB}\in \cH_A\otimes \cH_B$ of a bipartite system admits a Schmidt decomposition~\cite{nielsen2010quantum} $\ket{\psi}_{AB} = \sum_{i=1}^{d_\textup{min}} \sqrt{p_i}\ket{a_i}_A\otimes\ket{b_i}_B$,
with $d_\textup{min}=\min\{d_A,d_B\}$ and $\{p_i\}$ positive numbers that sum up to 1, thus constituting a probability distribution.
We will take $\{p_i\}$ to be ordered without loss of generality, that is, we will assume $p_1\geq p_2 \geq\ldots$. The sets of orthonormal vectors $\{\ket{a_i}\}$ and $\{\ket{b_i}\}$ are some special and $\ket{\psi}$-dependent orthonormal bases for $\cH_A$ and $\cH_B$, respectively. We denote by $\SR(\psi)$  the Schmidt rank of $\ket{\psi}_{AB}$; it is the number of non-zero $p_i$'s, and satisfies $\SR(\psi)\leq d_{\min}$.

By considering a bipartite state $\rho_{AB}$ as a vector in $\cL(\cH_A\otimes\cH_B)$, we arrive at the Operator Schmidt Decomposition (see~\cite{aniello2009relation,lupo2008bipartite} and references therein)
\begin{equation}
\label{eq:OSD}
\rho_{AB} = \sum_{i=1}^{\OSR(\rho)}r_i A_i \otimes B_i.
\end{equation}
We denote by $\OSR(\rho)$ the Operator Schmidt Rank (OSR) of $\rho$; this is the number of non-zero Operator Schmidt Coefficients (OSCs) $r_i$, which are taken to be positive. The sets of operators  $\{A_i\}_{i=1}^{d_A^2}$ and $\{B_i\}_{i=1}^{d_B^2}$ form ($\rho$-dependent) orthonormal bases for the spaces $\cL(\cH_A)$ and $\cL(\cH_B)$, respectively. We remark that, since $\rho_{AB}$ is Hermitian, the two orthonormal operator bases in~\eqref{eq:OSD} can be (but need not be) taken to be composed of Hermitian operators. The OSCs are the singular values of the matrix $[C_{ij}(\rho_{AB})]_{ij}$, with $C_{ij}(\rho_{AB}):=\Tr( F_i^\dagger\otimes G_j^\dagger \rho_{AB})$. Such a matrix is sometimes referred to as \emph{correlation matrix}. The operators $\{F_i\}$ and $\{G_j\}$ are arbitrary local orthonormal bases for the respective operator spaces. As in the case of the Schmidt decomposition for vector states, without loss of generality we can take the OSCs to be ordered as $r_1 \geq r_2 \geq \ldots $;
Such coefficients satisfy $\sum_ir_i^2 = \Tr(\rho^2)$. One immediately convinces oneself that $\OSR(\rho_{AB})\leq d_{\min}^2$, as the vector space $\cL(\cH_A)$ has dimension $d_A^2$ (similarly for $\cL(\cH_B)$). The SD of a vector state $\ket{\psi}_{AB}$ and the OSD of the corresponding density matrix $\proj{\psi}_{AB}$ are very much related:  one has $r_i=\sqrt{p_k}\sqrt{p_l}$, $A_i = \projj{a_k}{a_l}$, and $B_i = \projj{b_k}{b_l}$, for $i=(k,l)$ a multi-index, so that $\OSR(\proj{\psi}_{AB})=\SR(\psi_{AB})^2$.

\section{Operator Schmidt Decomposition and ancilla-assisted process tomography}

As anticipated, AAPT requires the preparation of a bipartite system in a single bipartite state $\rab$. One subsystem (the probe) is sent through the channel $\Lambda$ to be characterized. Using \reff{eq:OSD}, the output $\rho_{\Lambda}\deff(\Lambda\otimes\id)[\rab]$ reads
\begin{equation}\label{output}
\rho_{\Lambda}=\sum_{l=1}^{OSR(\rab)}r_l\Lambda[A_l]\otimes B_l.
\end{equation} 
Then, by reconstructing $\rho_\Lambda$, one recovers the action of the channel on the basis element $A_l$ via $\Lambda[A_l]=\Tr_B((\mathbb{1}\otimes B_l^\dag)\rho_{\Lambda})/r_l$ (for $r_l>0$). It follows that inputs with maximal OSR enable complete characterization of the channel, since its action on a complete operator basis of $\cL(\cH_A)$ can be reconstructed \cite{d2003imprinting,altepeter2003ancilla}. It is clear that input states defined as faithful are states with maximal OSR, more precisely with OSR equal to $d_A^2$. Correspondingly, a set of (potentially unfaithful, when considered individually) bipartite states $\{\rho_{AB,i}\}$ is called faithful if the local operators $\{A_{l,i}\}_{l=1}^{\OSR(\rho_{AB,i})}$ of $\OSD(\rho_{AB,i})$, when considered together, generate the whole $\cL(\cH_A)$, i.e., if $\text{span}(\{A_{l,i}\}_{l,i})=\cL(\cH_A)$.

\section{Generating a faithful set of inputs with general local channels}

The core idea of this work is to show that the correlations of a fixed bipartite state, of whatever degree,  can in principle be exploited to allow ``more efficient'' process tomography. Such correlations can be measured through the OSR, which obeys very general requirements that any meaningful measure of total correlations must satisfy. Indeed, the OSR is minimal for and only for non correlated (product) states, and it is monotone under local channels. 
Not relying on correlations, like in SPT, is the same as considering minimal OSR---that is, OSR equal to 1---for the inputs. On the other hand, fully relying on correlations, like in AAPT, means requiring maximal OSR for a single bipartite input. These considerations legitimate the intuition that intermediate values for the OSR should be consistent with the use of an intermediate number of inputs.

In this section we analyze how we can achieve the condition $\text{span}(\{A_{l,i}\})=\cL(\cH_A)$ indicated at the end of the last section, where each set $\{A_{l,i}\}_l$ is a local orthonormal basis for $(\Gamma_i\otimes\id)[\rho_{AB}]$, for $\Gamma_i$ the local operations applied on the probe before it is subject to the channel $\Lambda$ (see Figure \ref{fig:capt}).

Let us first remark why channel tomography is certainly possible in this setup. The reason is simple: each channel $\Gamma_i$ may simply be taken to have constant output corresponding to one of the input states $\rho_A^i$ used in standard channel tomography (Figure \ref{fig:SPT}). With this ``trivial'' strategy, we do not make use of correlations at all, but we certainly achieve the task at hand. Having estalished this, let us move to the issue of ``optimizing'' the $\Gamma_i$'s, at least with respect to their number.

Let $\{A_l\}_{l=1}^{d^2}$ be a Hermitian local orthonormal basis for the operator Schmidt decomposition for $\rho_{AB}$, comprising the $\OSR(\rho_{AB})$ elements corresponding to non-zero OSCs. It is clear that $\vspan(\{A_{l,i}\}_l) = \vspan(\{\Gamma_i[A_{l}]\}_l)$, so that $\text{span}(\{A_{l,i}\}_{l,i}) = \vspan(\{\Gamma_i[A_{l}]\}_{l,i})$. Thus, our goal is the following: given $\{A_l\}_{l=1}^{\OSR(\rho_{AB})}$, find a (minimal) way of choosing the local maps $\Gamma_i$ so that  $\vspan(\{\Gamma_i[A_{l}]\}_{l,i}) = \cL(\cH_A)$. By minimal, we mean that we want to identify the smallest possible number of local channels $\Gamma_i$ that are needed to achieve such a condition. In the following we provide a construction to achieve this.

Let us consider the following family of maps,
\begin{equation}\label{eq:tp_cp}
\Gamma_i[X]\deff (1-\epsilon)\Tr(X)\frac{\mathbb{1}}{d}+\epsilon\, \tilde\Gamma_i[X],
\end{equation}
which are each a convex combination (for $0\leq \epsilon \leq1$) of the totally depolarizing channel $X\mapsto \Tr(X)\frac{\mathbb{1}}{d}$ and of
\begin{equation}
\begin{split}
\tilde\Gamma_i[X]&\deff\sum_j\Tr(A_jX)A_{\gamma_{j,i}} \\
&\quad+[\Tr(X) -\sum_{j}\Tr(A_jX)\Tr(A_{\gamma_{j,i}})]\frac{\mathbb{1}}{d}.
\end{split}
\label{eq:tp_ncp}
\end{equation}
Notice that $\tilde\Gamma_i$ is not necessarily a channel, but it is a linear map that is trace preserving by construction. 
Here we denote $\gamma_{j,i}\deff j\oplus (i\cdot \OSR(\rab))$, and the $A_i$'s form a local orthonormal basis that is a superset of the local operators of the OSD of $\rho_{AB}$, as in \reff{eq:OSD}. The symbol $\oplus$ indicates addition modulo $d^2$, and we let $i=1,2,3\ldots$. By construction, the maps $\Gamma_i[X]$ are trace preserving, and, for $0<\epsilon<1$ small enough, completely positive. This is because, within the set of linear trace-preserving maps, there is a ball of completely positive maps around the totally depolarizing channel. Notice that, in principle, we could consider any other channel with full-rank fixed output, at the ``cost'' of considering some other $\epsilon$. Such full-rank fixed output (as well as $\epsilon$) could even be made to depend on $i$.

It is easy to recognize the action of $\Gamma_i$ on the generic basis element $A_n$:
\begin{equation}
\Gamma_i[A_n]= \big(\Tr(A_n-\epsilon A_{\gamma_{n,i}})\big)\frac{\mathbb{1}}{d}+\epsilon\, A_{\gamma_{n,i}},
\end{equation}
i.e. the $i-$th map, acting upon the $n$-th element of the local basis, returns a liner combination of the basis element indexed as $n\oplus (i\cdot \OSR(\rab))$ and of the identity. To make the action of the channels clearer, let us consider the action of, e.g.,  $\Gamma_1$. The latter would map the set $\{A_n\}_{n=1}^{\OSR(\rho_{AB})}$ to 
$\{
	\big(
		\Tr(A_n-\epsilon A_{n\oplus \OSR(\rho_{AB})})
		\big)
		\frac{\mathbb{1}}{d}
		+
		\epsilon\,
		A_{n\oplus \OSR(\rho_{AB})}
\}_{n=1}^{\OSR(\rho_{AB})}.
$
Thus, it should be clear that---up to the detail of whether we need to choose a fixed state different from the maximally mixed one in \eqref{eq:tp_cp} and \eqref{eq:tp_ncp} in order to certainly obtain a set whose elements are all linearly independent from the other generated sets---given an incomplete set of basis elements $\{A_n\}_{n=1}^{\OSR(\rho_{AB})}$, we are able to obtain operators spanning the same space as the remaining $d_A^2-\OSR(\rho_{AB})$ ones through the application of $\lceil\frac{d_A^2}{\OSR(\rho_{AB})}\rceil-1$ channels. This leads directly to the following theorem.

\begin{Theorem}
	Let $\rab$ be a bipartite state with $OSR(\rab)=k$. Then there is a set of quantum channels $\Gamma_i$, with $i=1,\ldots,\lceil\frac{d_A^2}{k}\rceil$, such that $\{(\Gamma_i\otimes\id)[\rho_{AB}]\}_i$ is faithful. Without loss of generality, one of the channels can be taken to be the identity channel.
\end{Theorem}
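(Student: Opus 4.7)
The plan is to execute the construction sketched just before the theorem, then verify the three items that are glossed over: (i) the output sets collectively span $\cL(\cH_A)$, (ii) the $\Gamma_i$ can be chosen completely positive, and (iii) one of them may be taken to be the identity.

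First I would fix a Hermitian orthonormal basis $\{A_l\}_{l=1}^{d_A^2}$ of $\cL(\cH_A)$ whose first $k$ elements are the local operators appearing in the OSD of $\rho_{AB}$, so that $\vspan(\{A_l\}_{l=1}^{k})$ is precisely the ``imprintable'' subspace accessible without any pre-processing. Let $N\deff\lceil d_A^2/k\rceil$. The goal reduces to choosing $N$ local maps $\Gamma_i$ ($i=0,1,\ldots,N-1$) so that $\vspan\bigl(\bigcup_i\Gamma_i[\vspan(\{A_l\}_{l=1}^{k})]\bigr)=\cL(\cH_A)$, because a local map applied to the probe commutes with the OSD structure and the local operator basis on the $B$-side is unaffected. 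I would then define $\Gamma_0\deff\id$ (which handles $i=0$ and immediately settles claim (iii)), and for $i\geq 1$ define $\Gamma_i$ via \eqref{eq:tp_cp}--\eqref{eq:tp_ncp}, using the shift $\gamma_{j,i}=j\oplus(i\cdot k)$ on indices modulo $d_A^2$.

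Next I would verify faithfulness. Acting on $A_n$ with $1\leq n\leq k$, $\Gamma_i$ returns $\epsilon A_{\gamma_{n,i}}$ plus a multiple of $\I$. Since $\{A_{\gamma_{n,i}}\}_{n=1}^{k}$ as $i$ ranges over $1,\ldots,N-1$ covers all indices in $\{k+1,\ldots,d_A^2\}$ (with possible overlap in the last block, which is harmless), the operators $\{A_{\gamma_{n,i}}\}$ together with the original $\{A_l\}_{l=1}^k$ and $\I$ span $\cL(\cH_A)$. Because each $\Gamma_i[A_n]$ differs from $\epsilon A_{\gamma_{n,i}}$ only by a scalar multiple of $\I$, and $\I$ is itself in the output set (coming from, say, the image of the identity component or from $\Gamma_0$ if $A_1\propto\I$), linear combinations recover every $A_{\gamma_{n,i}}$ separately; hence the union of spans is $\cL(\cH_A)$. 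If $\I$ is not already among $\{A_l\}_{l=1}^k$ (i.e.\ if $\Tr\rho_A\cdot$ is not ``visible'' in the OSD), I would note that $\Gamma_0=\id$ maps $\rho_{AB}$ to a state whose reduced $A$-marginal involves $\I$, so $\I$ is always accessible; a cleaner alternative is to note that the construction freely lets one replace the depolarizing ``anchor'' by any full-rank state, so the argument is robust.

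Then I would dispatch complete positivity. The map $\Gamma_i$ is a convex combination of the depolarizing channel $X\mapsto\Tr(X)\I/d$ and $\tilde\Gamma_i$; the depolarizing channel lies in the interior of the set of CPTP maps (with respect to, say, the Choi-matrix topology), so there exists $\epsilon_i>0$ such that $\Gamma_i$ is CPTP for all $\epsilon\in[0,\epsilon_i]$. Taking $\epsilon\deff\min_i\epsilon_i>0$ makes every $\Gamma_i$ a legitimate quantum channel simultaneously; trace preservation is automatic from the definition.

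The main obstacle is really a bookkeeping one rather than a conceptual one: one has to ensure that the ``identity component'' introduced by the depolarizing admixture does not collapse the spanning argument in a pathological case (for instance when the accessible span from $\rho_{AB}$ already contains $\I$, or when the last shift block wraps around and overlaps with earlier ones). Both issues are handled by the freedom in choosing the fixed full-rank output mentioned after \eqref{eq:tp_ncp}: a generic choice ensures linear independence across the $N$ output sets. This, together with $\Gamma_0=\id$, completes the proof.
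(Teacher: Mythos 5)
Your proposal follows essentially the same route as the paper: it instantiates the maps of Eqs.~\eqref{eq:tp_cp}--\eqref{eq:tp_ncp} with the index shift $\gamma_{j,i}=j\oplus(i\cdot k)$, argues complete positivity via the ball of CP maps around the totally depolarizing channel, and handles the residual $\I$-component by the same freedom in the choice of the full-rank anchor state that the paper invokes. Your treatment of the spanning/bookkeeping issue is, if anything, slightly more explicit than the paper's, but it is the same argument.
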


\section{Generating a faithful set of inputs with unitary local channels}

In this section we consider constraining the local channels $\Gamma_i$ that act on the probe in Fig.~\ref{fig:capt} to be unitary. The question we address is that of determining how many unitary rotations $U_i$ are needed in order to obtain a faithful set of input states $\{U_i\otimes\I\rho_{AB}U_i^\dagger\otimes\I\}_i$. As discussed in the case of general local operations, this corresponds to finding out how many unitary rotations are needed so that  $\{U_iA_lU_i^\dagger\}_{i,l}$ spans the entire space $\cL(\cH_A)$, where $\{A_l\}_l$ is a set of orthonormal local OSD operators corresponding to non-zero operator Schmidt coefficients.

We remark that in the case where we impose the constraint that the channels be unitary, the fact that process tomography is possible at all is not immediate. Indeed, it is not anymore the case that this is possible for \emph{all} input states $\rho_{AB}$. Nonetheless, we prove that it is possible for all states that are not of the form $\frac{{\I}_A}{d}\otimes\rho_B$: notice that the latter states are not only uncorrelated, but such that the state of the probe $A$ is maximally mixed. Notice also that our result means that any form of correlations is enough to make process tomography by local unitaries possible.

Let us first establish this result.

\subsection{Process tomography via local unitary rotation of almost any input}

It is convenient to recall the definition of frame \cite{duffin1952class}. Such a concept is generally defined for families of vectors in inner product spaces. In our framework we exploit the inner product structure of $\cL(\cH)$ and define a frame as a collection  $\{P_k\}$ of operators such that there are real numbers $0<a\leq b<\infty$ satisfying
\begin{equation}\label{fc}
	a\|X\|_2^2\leq\sum_k|\Tr(P_k X)|^2\leq b\|X\|_2^2
\end{equation}
for any $X\in \cL(\cH)$. A frame generalizes the notion of basis. Notice in particular that, if the frame is actually an orthonormal basis, that is, if $\Tr(P_k^\dagger P_l)=\delta_{kl}$, then the frame condition \eqref{fc} is satisfied with $a=b=1$. In finite dimensions, a finite collection $\{P_k\}$ is a frame for $\cL(\cH)$ if and only if it is a spanning set for $\cL(\cH)$, while an infinite collection $\{P_k\}$, even when a spanning set, may not constitute a frame, as there might not be a finite $b$ that satisfies \eqref{fc}. The lower bound in \eqref{fc}, for $a>0$, ensures that $X$ can be reconstructed from the values $\Tr(P_k X)$. It should be clear that, given a frame, one can always consider a subset of the elements of the frame, so that such subset forms a basis, that is, a spanning set of linearly independent operators.

What we will prove is that 
it is possible to choose $d^2$ unitaries $\{U_i\}_{i=1}^{d^2}$ (with $U_1=\I$ without loss of generality) so that $\{U_i\rho U_i^\dagger\}_i$ is a frame and a basis for the space of operators of the input ancilla, initially prepared in the state $\rho$, as long as $\rho\not\propto \I$. To prove this, we will need the notion of twirling, or 
twirl operation \cite{werner1989quantum}.
The latter is the linear projection $\mathcal{T}$ on bipartite operators $Y\in\cL(\mathbb{C}^d\otimes\mathbb{C}^d)$ defined as 
\begin{equation}\label{twirl}
	\mathcal{T}(Y)=\int_U  (U\otimes U)Y(U^{\dagger}\otimes U^{\dagger})dU,
\end{equation}
where the integral is taken with respect to the Haar measure of the unitary group on $\mathbb{C}^d$. Since any operator commuting with all unitaries of the form $U\otimes U$ can be written as a linear combination of $\mathbb{1}$ and $V$ (where $V$ is the flip operator, defined implicitly by the its action $V\ket{\psi}\otimes\ket{\varphi}=\ket{\varphi}\otimes\ket{\psi}$, for all $\ket{\psi},\ket{\varphi}\in \mathbb{C}^d$), it follows that \cite{werner1989quantum}
\[
\mathcal{T}(Y)=\alpha(Y)\mathbb{1}+\beta(Y) V.
\]
The coefficients $\alpha(X)$ and $\beta(X)$ are fixed by the conditions
\begin{align}
\Tr(\mathcal{T}(Y)) 	&=	\Tr(Y)	\\
\Tr(\mathcal{T}(Y)V)	&=	\Tr(YV),
\end{align}
solved by
\begin{align}
\label{eq:alphabeta}
\alpha(Y)&=\frac{d\Tr(Y)-\Tr(YV)}{d^3-d}, \\ \beta(Y)&=\frac{d\Tr(YV)-\Tr(Y)}{d^3-d}.
\end{align}
We will use the fact that the twirling can be approximated by a unitary $2$-design, that is by a finite set of $n$ unitaries $\{U_i\}_{i=1}^n$ such that
 \begin{equation}\label{twirldesign}
 \mathcal{T}(Y)= \frac{1}{n}\sum_{i=1}^n U_i\otimes U_i Y U_i^{\dagger}\otimes U_i^{\dagger}.
 \end{equation}
We will obtain our frame by taking $d^2$ of such unitaries.
 
Let $\{U_i\}_{i=1}^n$ be a unitary $2$-design (without loss of generality, one of the unitaries can be taken to be the identity). Let us check the frame conditions \eqref{fc} of $\{U_iA U_i^\dagger\}_{i=1}^n$, for an arbitrary $A\in\cL(\mathbb{C}^d)$. One has
\begin{align}
&\quad\sum_{i=1}^{n}|\Tr(U_iAU_i^\dag X)|^2 \\
&=\sum_{i=1}^{n}\Tr(U_iAU_i^\dag X)\Tr(U_iA^\dagger U_i^\dag X^\dag)\nonumber\\
&=\sum_{i=1}^{n}\Tr\Big((U_i\otimes U_i)(A\otimes A^\dag)(U_i^\dag\otimes 
U_i^\dag) (X\otimes X^\dag)\Big)\nonumber\\
&\propto \Tr(\mathcal{T}(A\otimes A^\dag) X\otimes X^\dag)\nonumber
\\
&=\Tr((\alpha(A\otimes A^\dagger)\mathbb{1}+\beta(A\otimes A^\dagger) V) X\otimes X^\dag)\nonumber\\
&=\alpha(A\otimes A^\dagger)\Tr( X\otimes X^\dag)+\beta(A\otimes A^\dagger)\Tr(VX\otimes X^\dag)\nonumber\\
&=\alpha(A\otimes A^\dagger)|\Tr(X)|^2+\beta(A\otimes A^\dagger)\|X\|_2^2\\
\end{align}
where 
$\alpha(A\otimes A^\dagger)$ and $\beta(A\otimes A^\dagger)$ are given by Eqs. \eqref{eq:alphabeta} applied to the case $Y=A\otimes A^\dagger$, so that 
\begin{align}
\alpha(A\otimes A^\dagger)&=\frac{d|\Tr(A)|^2-\|A\|_2^2}{d^3-d}, \\ \beta(A\otimes A^\dagger)&=\frac{d\|A\|_2^2-|\Tr(A)|^2}{d^3-d}.
\end{align}
Working in finite dimensions, we see that the frame condition \eqref{fc} is achieved as long as $\alpha(A\otimes A^\dagger)>0$, $\beta(A\otimes A^\dagger)>0$, which means as long as
\begin{equation}
d|\Tr(A)|^2\geq\|A\|_2^2
\label{eq:cond1}
\end{equation}
and
\begin{equation}
|\Tr(A)|^2<d\|A\|_2^2.
\label{eq:cond2}
\end{equation}
Let us  assume that $A$ is a state, specifically the reduced state $\rho_A$ of the probe. Then, the first inequality is automatically satisfied. Moreover, the Cauchy-Schwarz inequality implies that $|\Tr(A)|^2\leq |\Tr(\I)|\|A\|_2^2=d|\|A\|_2^2$, with equality if and only if $A\propto \I$. 
Having assumed that that $A$ is the state $\rho_A$, this is the condition that $\rho_A$ is not maximally mixed.

Thus, we have found that, independently of the presence of an ancilla, as long as the reduced state $\rho_A$ of the probe  is not maximally mixed, we can find $d^2$ unitaries, one of which is the identity, such that  $\{U_i\rho_A U_i^\dagger\}_{i=1}^{d^2}$ is a tomographically faithful set.

We can extend this result to the case where there are non-vanishing correlations. The operator $A$ for which we want that $\{U_iA U_i^\dagger\}_{i=1}^{d^2}$ be a tomographically complete set can be taken to be any linear combination of the hermitian operator Schmidt operators $\{A_l\}_l$ corresponding to non-zero operator Schmidt coefficients (one example being the reduced state $\rho_A$). Suppose $\rab$ is not product. Then there are at least two terms in its OSD, and at least one between one of the $A_i's$ and the reduced state $\rho_A$ is not proportional to the identity; we can then consider as $A$ in the above construction of the frame some linear combination of the latter two operators that respects conditions \eqref{eq:cond1}-\eqref{eq:cond2}.
On the other hand, since a state of the form $\frac{{\I}_A}{d}\otimes\rho_B$  is invariant under local unitary transformation on $A$, we have proven our statement:

\begin{Theorem}\label{th:OSR1} For all non product bipartite states and for all product states $\rab$ such that $\ra\neq\mathbb{1}_A/d$, there always exist $d^2$ unitary operators $U_i\in SU(d)$ such that the set $\{(U_i\otimes\id)\rab(U_i\otimes\id)^\dag\}_{i=1}^{d^2}$ is faithful.
\end{Theorem}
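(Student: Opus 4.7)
The strategy is to reduce faithfulness to a single-operator spanning condition and then produce the $d^2$ unitaries from a unitary $2$-design. Writing the Hermitian local OSD operators of $\rab$ with nonzero OSCs as $\{A_l\}_{l=1}^{\OSR(\rab)}$, faithfulness of $\{(U_i\otimes\id)\rab(U_i\otimes\id)^\dagger\}_i$ is by definition the condition $\vspan(\{U_iA_lU_i^\dagger\}_{i,l})=\cL(\cH_A)$. Since $U_iAU_i^\dagger\in\vspan(\{U_iA_lU_i^\dagger\}_l)$ for any $A\in\vspan(\{A_l\}_l)$, it suffices to exhibit a single such $A$ with $\vspan(\{U_iAU_i^\dagger\}_{i=1}^{d^2})=\cL(\cH_A)$.

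For the unitaries I would take any size-$d^2$ unitary $2$-design containing $U_1=\I$. The computation already developed in this section uses~\eqref{twirldesign} to rewrite $\sum_i|\Tr(U_iAU_i^\dagger X)|^2=\alpha(A\otimes A^\dagger)|\Tr(X)|^2+\beta(A\otimes A^\dagger)\|X\|_2^2$ with $\alpha,\beta$ as in~\eqref{eq:alphabeta}. The spanning condition is equivalent to positive definiteness of this quadratic form in $X$, which, using the Cauchy-Schwarz estimate $|\Tr(X)|^2\leq d\|X\|_2^2$ saturated only at $X\propto\I$, reduces to the two independent requirements $A\not\propto\I$ (so that $\beta>0$) and $\Tr(A)\neq 0$ (so that $\beta+d\alpha=|\Tr(A)|^2/d>0$, ruling out the form vanishing on $\I$).

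The only genuinely delicate step is to construct an appropriate $A$ inside the span of the $A_l$. If $\ra\neq\I_A/d$ I would simply take $A=\ra=\sum_lr_l\Tr(B_l)A_l$, which lies in the correct span and visibly satisfies $A\not\propto\I$ and $\Tr(A)=1$; this also disposes of every product input permitted by the hypothesis. If $\rab$ is not product but happens to have $\ra=\I_A/d$, I would instead set $A=\ra+\eps A_\ell$ for small $\eps>0$, where $A_\ell$ is any OSD operator not proportional to $\I$. Such an $A_\ell$ exists because $\OSR(\rab)\geq 2$ and orthonormality of the $A_l$ forces at most one of them to be a multiple of $\I$. Then $A$ is in the right span, $A\not\propto\I$ since the perturbation is orthogonal to $\I$, and $\Tr(A)=1+\eps\Tr(A_\ell)$ is nonzero for small enough $\eps$. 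The excluded states $\rab=\I_A/d\otimes\rb$ are correctly outside the statement, since they are invariant under any local unitary on $A$ and so force $\OSR=1$ with a trivial single-term OSD, precluding faithfulness by unitary rotations alone.
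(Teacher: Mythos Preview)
Your proposal follows essentially the same strategy as the paper: reduce faithfulness to the spanning property of $\{U_iAU_i^\dagger\}_i$ for a single Hermitian $A$ in $\vspan(\{A_l\}_l)$, compute $\sum_i|\Tr(U_iAU_i^\dagger X)|^2$ via a unitary $2$-design and the twirl formula, and check positive definiteness of the resulting quadratic form in $X$. Your positive-definiteness analysis is in fact sharper than the paper's: decomposing $X$ into its traceless and identity parts shows the form equals $(\alpha d+\beta)|t|^2+\beta\|X_0\|_2^2$, so the exact conditions are $\Tr(A)\neq 0$ and $A\not\propto\I$, whereas the paper imposes the stronger (and unnecessary) requirement $\alpha\geq 0$ via~\eqref{eq:cond1}. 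Your construction of $A$ in the two cases is also the same in spirit as the paper's.

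There is, however, one genuine slip. Unitary $2$-designs of size exactly $d^2$ do not exist in general; the minimal cardinality of a unitary $2$-design in dimension $d$ exceeds $d^2$ (already for $d=2$ there is no $4$-element unitary $2$-design). The paper deals with this by taking a $2$-design $\{U_i\}_{i=1}^n$ of whatever size $n$ is available, using it to establish that $\{U_iAU_i^\dagger\}_{i=1}^n$ is a frame and hence spans $\cL(\cH_A)$, and only \emph{then} extracting $d^2$ of the $U_i$ whose conjugates of $A$ form a basis. Your argument goes through verbatim once you insert this extraction step. A smaller aside: the phrase ``the perturbation is orthogonal to $\I$'' is not quite what you established---you only chose $A_\ell\not\propto\I$, not $\Tr(A_\ell)=0$---but the conclusion $A=\I_A/d+\eps A_\ell\not\propto\I$ follows anyway, since $\I_A/d+\eps A_\ell=c\I$ would force $A_\ell\propto\I$.
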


\noindent 
\subsection{Pure probe-ancilla state}
 
For pure states we are able to find the optimal number of local unitaries needed to construct a faithful set starting from a fixed pure state of Schmidt rank $k$:
\begin{equation}\label{eq:PS}
\proj{\psi}_{AB}=\sum_{i,j=1}^k\sqrt{p_ip_j}\projj{i}{j}_A\otimes\projj{i}{j}_B.
\end{equation} 
\begin{Theorem}\label{th:PS}
Let $\proj{\psi}_{AB}$ be as in \reff{eq:PS}. Then, there are $n\deff\left\lceil\frac{d}{k}\right\rceil^2$ local unitaries $U_i$ such that the set given by $\{(U_i\otimes\I)\proj{\psi}_{AB}(U_i\otimes\I)^\dag\}_{i=0}^{n-1}$, with $U_0=\I$, is faithful.
\end{Theorem}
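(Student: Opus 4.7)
My first step would be to turn the statement into a geometric question about $k$-dimensional subspaces of $\cH_A$. The local OSD operators of $\proj{\psi}_{AB}$ are the matrix units $\{\projj{i}{j}\}_{i,j=1}^k$, spanning $\cL(V)$ with $V \deff \vspan\{\ket{1},\ldots,\ket{k}\}$. After a local unitary $U_i$ they become a basis of $\cL(U_i V)$, so faithfulness is equivalent to $\sum_i \cL(W_i) = \cL(\cH_A)$ with $W_i \deff U_i V$ a $k$-dim subspace. Conversely every $k$-dim subspace can be realised as $UV$ for some unitary (extend any unitary between ONBs of $V$ and $W$ to the whole space), and $W_0 = V$ enforces $U_0 = \I$. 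So the task reduces to exhibiting $n = \lceil d/k \rceil^2$ $k$-dimensional subspaces whose operator algebras span $\cL(\cH_A)$.

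I would then give an explicit block construction. Set $q = \lceil d/k \rceil$, pick an orthonormal basis $\{\ket{e_m}\}_{m=1}^d$ extending the Schmidt basis, and define $q$ $k$-dimensional blocks $\tilde B_\alpha \deff \vspan\{\ket{e_{(\alpha-1)k+j}}\}_{j=1}^k$ for $\alpha < q$ and $\tilde B_q \deff \vspan\{\ket{e_{d-k+j}}\}_{j=1}^k$, so that $\bigcup_\alpha \tilde B_\alpha = \cH_A$ (the last block overlaps with $\tilde B_{q-1}$ when $k \nmid d$). Index $q^2$ candidate subspaces by ordered pairs $(\alpha,\beta) \in [q]^2$: take $W_{\alpha\alpha} \deff \tilde B_\alpha$ and, for $\alpha \neq \beta$, $W_{\alpha\beta} \deff \vspan\{\ket{\phi^\alpha_j} + c_{\alpha\beta} \ket{\phi^\beta_j}\}_{j=1}^k$, with $\{\phi^\alpha_j\}$ a fixed basis of $\tilde B_\alpha$ and the mixing coefficient $c_{\alpha\beta} \in \{1,i\}$ chosen so that $c_{\alpha\beta} \neq c_{\beta\alpha}$ (say $1$ when $\alpha < \beta$ and $i$ otherwise). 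Each $W_{\alpha\beta}$ is realised as $U_{\alpha\beta} V$ for a suitable unitary, and $U_{11}$ can be taken to be $\I$.

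The span condition would be proved pairwise: for each $\alpha \neq \beta$ I would show $\cL(\tilde B_\alpha) + \cL(\tilde B_\beta) + \cL(W_{\alpha\beta}) + \cL(W_{\beta\alpha}) \supseteq \cL(\tilde B_\alpha + \tilde B_\beta)$. For orthogonal blocks this is a direct $2 \times 2$ block-matrix computation: elements of $\cL(W_{\alpha\beta})$ have block form $\left(\begin{smallmatrix} C & C \\ C & C \end{smallmatrix}\right)$ and those of $\cL(W_{\beta\alpha})$ have $\left(\begin{smallmatrix} C' & iC' \\ -iC' & C' \end{smallmatrix}\right)$; subtracting the diagonal-block contributions leaves independent off-diagonal blocks $(D_1, D_2)$, filling $\cL(\tilde B_\alpha + \tilde B_\beta)$. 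Since $\bigcup_\alpha \tilde B_\alpha = \cH_A$, any rank-one operator $\projj{u}{v}$ decomposes as $\sum_{\alpha,\beta} \projj{u_\alpha}{v_\beta}$ with each summand in some $\cL(\tilde B_\alpha + \tilde B_\beta)$, giving the global spanning. The main obstacle is the overlapping pair $\{q-1, q\}$ when $k \nmid d$: the block picture is no longer clean since $\dim(\tilde B_{q-1}+\tilde B_q) = k+r < 2k$ with $r = d-(q-1)k$, but the missing ``outside'' dimensions amount to only $2r^2 < 2k^2$, and a direct index-tracking check confirms that $\cL(W_{q-1,q}) + \cL(W_{q,q-1})$ still supplies these on top of $\cL(\tilde B_{q-1}) + \cL(\tilde B_q)$, possibly with redundancy but without inflating the count beyond $q^2$.
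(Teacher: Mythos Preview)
Your approach is essentially the same as the paper's. Both proofs partition $\{1,\ldots,d\}$ into $q=\lceil d/k\rceil$ blocks and use, for each ordered pair of blocks, a $k$-dimensional subspace: the block itself for diagonal pairs, and a ``superposition'' subspace $\vspan\{\ket{\phi^\alpha_j}+c\,\ket{\phi^\beta_j}\}_j$ with $c\in\{1,i\}$ for off-diagonal pairs. The paper writes this out via explicit unitaries $X,U,V$ and operator identities expressing the off-diagonal matrix units $C_{ij},D_{ij}$ as linear combinations of $A_{ij},\,XA_{ij}X^\dag,\,UA_{ij}U^\dag,\,VA_{ij}V^\dag$; your $2\times2$ block-matrix computation is exactly the same calculation in compressed form. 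Your preliminary reduction to the subspace formulation $\sum_i\cL(W_i)=\cL(\cH_A)$ is a clean way to organize the argument, but it does not change the underlying construction.

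The only real difference is in the treatment of $k\nmid d$: the paper invokes an embedding into dimension $d'=\lceil d/k\rceil k$, while you let the last block overlap the penultimate one and argue directly. Both treatments are somewhat informal at this point; your dimension count ($2r^2$ missing off-diagonal dimensions) is correct, and the claim does go through, but the ``direct index-tracking check'' you defer is where the actual work lies in your version, so if you write this up you should carry it out explicitly (the case $d=3$, $k=2$ already shows that even one of $W_{q-1,q},W_{q,q-1}$ can suffice, so there is redundancy to spare).
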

\begin{proof}
Let $\ket{\psi_0}=\ket{\psi}$. State tomography of the output $(\Lambda\otimes\id)\proj{\psi_{0}}$  determines the channel $\Lambda$ partially, i.e., its action on $\{\projj{i}{j}\}$ only for $i,j=1,\ldots,k$. To obtain the image under $\Lambda$ of the remaining $\projj{i}{j}$ elements, it is convenient to consider the case when $k$ divides $d$.

We will start by analyzing how it is possible to reconstruct the action of $\Lambda$ on all of $\{\projj{i}{j}\}$, for $i,j=1,\ldots,2k$. 
Let us define the set of operators
\begin{align}
A_{ij}&=\projj{i}{j},\\
B_{ij}&=\projj{i+k}{j+k},\\
C_{ij}&=\projj{i+k}{j},\\
D_{ij}&=\projj{i}{j+k},
\end{align}
for $i,j=1,\ldots,k$, and where sums within kets should be in general understood modulus $d$.
Also, let us introduce unitary operators whose action restricted to the vectors $\ket{n}$, for $n=1,\ldots,k$, is given by
\begin{align*}
&X\ket{n}=\ket{n+k},\\
&U\ket{n}=2^{-1/2}(\ket{n}+\ket{n+k}),\\
&V\ket{n}=2^{-1/2}(\ket{n}+i\ket{n+k}).\\
\end{align*}
Acting locally on $\proj{\psi}$, such operators produce the following states
\begin{align*}
\proj{\psi_1}\vcentcolon=&(X\otimes\mathbb{1}) \proj{\psi} (X\otimes\mathbb{1})^\dag\\
\proj{\psi_2}\vcentcolon=&(U\otimes\mathbb{1}) \proj{\psi} (U\otimes\mathbb{1})^\dag\\
\proj{\psi_3}\vcentcolon=&(V\otimes\mathbb{1}) \proj{\psi} (V\otimes\mathbb{1})^\dag.
\end{align*}
Define $\Lambda[Y]=[\Lambda[Y_{ij}]]_{i,j=1}^{k}$, for $Y=A,B,C,D$. Then $\Lambda[A]$ is reconstructed  through tomography of $(\Lambda\otimes\id )\proj{\psi_{0}}$ (as already noticed), while $\Lambda[B]$ is obtained from $(\Lambda\otimes\id )\proj{\psi_1}$. On the other hand, $\Lambda[C]$ and $\Lambda[D]$ can be reconstructed by measuring the four outputs (i.e. $(\Lambda\otimes\id )\proj{\psi_{l}}$ for $l=0,\ldots,3$) and then combining the results. To be more precise, since 
\begin{align*}
C_{ij}	&=UA_{ij}U^\dag+iVA_{ij}V^\dag\\
		&\quad-\frac{1+i}{2}(A_{ij}+XA_{ij}X^\dag)\\
D_{ij}&=iVA_{ij}V^\dag-UA_{ij}U^\dag\\
		&\quad-\frac{i-1}{2}(A_{ij}+XA_{ij}X^\dag),
\end{align*}
\noindent
linearity implies 
\begin{align*}
\Lambda[C_{ij}]&=\Lambda[UA_{ij}U^\dag]+i\Lambda[VA_{ij}V^\dag]\\
&\quad-\frac{1+i}{2}(\Lambda[A_{ij}]+\Lambda[XA_{ij}X^\dag])\\
\Lambda[D_{ij}]&=i\Lambda[VA_{ij}V^\dag]-\Lambda[UA_{ij}U^\dag]\\
&\quad-\frac{i-1}{2}(\Lambda[A_{ij}]+\Lambda[XA_{ij}X^\dag]).
\end{align*}
Thus, we see that we have reconstructed $\Lambda[\projj{i}{j}]$ for $i,j=1,\ldots,2k$ with four local unitaries.

Information on the remaining $\Lambda[\projj{i}{j}]$ can be reconstructed similarly.
The theorem follows by reiterating this procedure, until recovering the action of $\Lambda$ on all the blocks.
More explicitly, it is possible to reconstruct $\Lambda[\projj{i}{j}]$ for $i,j\in \{1+p\cdot k,\dots, k+p\cdot k\}$ and $p=0,\ldots,d/k -1$
by considering the action of $p$-labelled $d/k$ unitaries (one being the identity) each performing one of the transformations
\begin{align*}
&\ket{n}\mapsto\ket{n+p\cdot k}.\\
\end{align*}
Once these `on-diagonal blocks' have been reconstructed, it is then possible to further reconstruct the `off-diagonal blocks' $\Lambda[\projj{i}{j}]$, for $i\in \{1+p\cdot k,\dots, k+p\cdot k\}$ and $j\in \{1+q\cdot k,\dots, k+q\cdot k\}$, $p\neq q$ by the use of $(d/k(d/k-1))/2$ pairs of unitaries that perform the transformations
\begin{align*}
&\ket{n}=2^{-1/2}(\ket{n+p\cdot k}+\ket{n+q\cdot k}),\\
&\ket{n}=2^{-1/2}(\ket{n+p\cdot k}+i\ket{n+q\cdot k}).\\
\end{align*}
This gives a total of $d/k + 2\cdot(d/k(d/k-1))/2 = (d/k)^2$ unitaries.

If $k$ does not divide exactly $d$, then one needs to consider an additional set of unitaries, but obviously the cost (in terms of unitaries) cannot be larger than in the case where we imagine the $A$ system embedded in a $d'$-dimensional system, with $d'=\lceil \frac{d}{k} \rceil\cdot k$.
\end{proof}
\noindent
In the light of the last theorem we see that the higher the correlations (in terms of OSR) of the fixed pure state, the less $U_i$ are required. As expected, when the fixed pure state has maximal OSR, one recovers completely the AAPT scenario. For pure state with $\OSR=1$, the number of experimental settings to perform channel tomography is again the one of SPT. As a final remark we observe, by looking at the proof of Theorem \ref{th:PS}, that one can derive the specific form of a particular set of $U_i$, besides establishing their existence. 

Contrary to the pure state case, for the case where the fixed state is mixed we have not derived a formula which directly links the OSR of the input to the number of unitaries needed to reach faithfulness. However, in the following we give specific examples that show that also when the fixed state is mixed, the presence of correlations dramatically reduces the number of local unitaries required to perform channel tomography. 

\subsection{Mixed probe-ancilla state: qubit-qudit inputs}

The first example involves a qubit-qudit system, for qubit channel tomography. We show that reducing the cardinality of the faithful set created by local unitaries on the qubit depends strongly on the quantumness of correlations on the qubit side. Before going into the details it is convenient to recall that a bipartite state is called classical on $A$ if it can be expressed as $\rab = \sum_i p_i \proj{a_i}_A\otimes \rho_i^B$, for some orthonormal basis $\{\ket{a_i}\}$. States that are not classical on $A$ are said to have non-zero quantum discord~\cite{henderson2001classical,ollivier2001quantum,RevModPhys.84.1655}. Also, we will make use of the following Lemma, in which we use the notion of Bloch vector for a generic Hermitian operator $L=L^\dagger$, given by $\vec{l}=(l_1,l_2,l_3)$, with $l_i=\Tr(\sigma_i L)$ and $\sigma_i$, $i=1,2,3$, the Pauli operators.
\begin{Lemma}
	\label{lem:commuting}
Consider Hermitian operators $A,B\in\cL(\mathbb{C}^2)$. Then, $A$ and $B$ commute if and only if their Bloch vectors are proportional.
\end{Lemma}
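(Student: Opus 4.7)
The plan is to reduce the commutator $[A,B]$ to an expression involving only the cross product of the Bloch vectors, and then observe that vanishing of the cross product characterises proportionality.

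First I would expand both operators in the Pauli basis. Since $\{\I, \sigma_1, \sigma_2, \sigma_3\}$ is an orthogonal basis of the real vector space of Hermitian operators on $\mathbb{C}^2$ with respect to the Hilbert--Schmidt inner product, the definition $l_i=\Tr(\sigma_i L)$ given just before the lemma yields
\begin{equation*}
A=\tfrac{\Tr(A)}{2}\I+\tfrac{1}{2}\vec{a}\cdot\vec{\sigma},\qquad B=\tfrac{\Tr(B)}{2}\I+\tfrac{1}{2}\vec{b}\cdot\vec{\sigma},
\end{equation*}
where $\vec{a}\cdot\vec{\sigma}\deff\sum_{i=1}^3 a_i\sigma_i$ and likewise for $\vec{b}$.

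Next I would compute the commutator. The identity part drops out, and using the well-known Pauli commutation relation $[\sigma_i,\sigma_j]=2i\sum_k\epsilon_{ijk}\sigma_k$ I get
\begin{equation*}
[A,B]=\tfrac{1}{4}[\vec{a}\cdot\vec{\sigma},\vec{b}\cdot\vec{\sigma}]=\tfrac{i}{2}\,(\vec{a}\times\vec{b})\cdot\vec{\sigma}.
\end{equation*}
Because $\sigma_1,\sigma_2,\sigma_3$ are linearly independent, $[A,B]=0$ if and only if $\vec{a}\times\vec{b}=\vec{0}$.

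Finally I would invoke the elementary fact from linear algebra in $\RR^3$ that $\vec{a}\times\vec{b}=\vec{0}$ is equivalent to $\vec{a}$ and $\vec{b}$ being linearly dependent, i.e.\ proportional (where the zero vector is regarded as proportional to any other vector, covering the degenerate cases in which $A$ or $B$ is a multiple of the identity). This gives the two implications of the lemma.

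I do not expect any real obstacle: the whole argument is a two-line computation plus a standard observation about the cross product. The only delicate point is the convention that ``proportional'' should be read inclusively so as to accommodate vanishing Bloch vectors; this is exactly the case where $A$ or $B$ lies in $\vspan(\I)$ and thus commutes with every Hermitian operator, which is consistent with the lemma.
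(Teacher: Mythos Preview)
Your proposal is correct and follows essentially the same approach as the paper: expand $A$ and $B$ in the Pauli basis, use $[\sigma_i,\sigma_j]=2i\epsilon_{ijk}\sigma_k$ to reduce $[A,B]$ to $\tfrac{i}{2}(\vec a\times\vec b)\cdot\vec\sigma$, and conclude from linear independence of the Pauli matrices that $[A,B]=0$ iff $\vec a\times\vec b=0$ iff the Bloch vectors are proportional. Your remark about the inclusive reading of ``proportional'' for vanishing Bloch vectors is a welcome clarification that the paper leaves implicit.
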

\begin{proof}
Let $\sigma_0=\mathbb{1}$ and denote $a_0=\Tr(A)$, $b_0=\Tr(B)$. Let also $\vec{a}=(a_1,a_2,a_3)$ and $\vec{b}=(b_1,b_2,b_3)$ be the Bloch vectors of $A$ and $B$, respectively, so that $A=\frac{1}{2}\sum_{i=0}^3a_i\sigma_i$ and $B=\frac{1}{2}\sum_{i=0}^3b_i\sigma_i$. 
Observe that
\begin{align*}
[A,B]
&=\left[\frac{1}{2}\sum_{i=0}^3a_i\sigma_i,\frac{1}{2}\sum_{i=0}^3b_i\sigma_i\right]\\
&=\frac{1}{4}\sum_{i,j=0}^3 a_ib_j  \left[\sigma_i,\sigma_j\right]\\
&=\frac{1}{4}\sum_{i,j,k=1}^3a_ib_j\,  2 i\epsilon_{ijk}\sigma_k\\
&=\frac{i}{2}\sum_{k=1}^3\left(\sum_{i,j=1}^3a_ib_j\epsilon_{ijk}\right)\sigma_k\\
&=\frac{i}{2}(\vec{a}\times\vec{b})\cdot\vec{\sigma}
\end{align*}
where we used the Levi-Civita symbol $\epsilon_{ijk}$, and $\times$ indicates the standard cross product between three-dimensional vectors.
\noindent
Since $\sigma_1,\sigma_2,\sigma_3$ are linearly independent, the expression in the last line above is zero if and only if the cross product $\vec{a}\times\vec{b}$ vanishes, which happens if and only if $\vec{a}=\lambda\vec{b}$, with $\lambda\in\mathbb{R}$.
\end{proof}

We are now in the position to state the following.
\begin{Theorem}
Let $A$ be a qubit.
Then, $\rab$ has quantum discord on $A$ if and only if $\rab$ allows correlation-assisted process tomography on $A$ with at most two unitary rotations.
\end{Theorem}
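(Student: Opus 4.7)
The strategy is to translate ``discord on $A$'' into a geometric condition on the Bloch vectors of the OSD operators on $A$, and then analyze the span generated by local unitary rotations. Recall that $\rab$ is classical on $A$ iff it admits a representation $\sum_i p_i\proj{i}_A\otimes\sigma_i^B$ for some orthonormal basis $\{\ket{i}\}_A$; equivalently, the OSD operators $\{A_l\}$ on $A$ are simultaneously diagonalizable, i.e., pairwise commuting. For a qubit $A$, Lemma~\ref{lem:commuting} then shows this is equivalent to all Bloch vectors $\vec a_l\in\RR^3$ being parallel (with the zero vector treated as parallel to anything). Hence $\rab$ has discord on $A$ iff at least two OSD operators $A_1,A_2$ have linearly independent Bloch vectors.

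For the direction ``no discord $\Rightarrow$ more than two unitaries are needed'', assume all OSD Bloch vectors are proportional to some $\hat n$. Applying two local unitaries $U_1,U_2$ with $SO(3)$ images $R_1,R_2$, the resulting operators $\{U_iA_lU_i^\dagger\}_{i,l}$ have Bloch vectors lying on the union of the two lines through $R_1\hat n$ and $R_2\hat n$, which span at most a $2$-plane in $\RR^3$. Combined with the identity direction, the span inside $\cL(\CC^2)$ has real dimension at most $1+2=3<4$, precluding faithfulness.

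For the direction ``discord $\Rightarrow$ at most two unitaries suffice'', set $S=\vspan\{A_l\}$, and use throughout that $\ra=\sum_l r_l\Tr(B_l)\,A_l\in S$ has trace one, so $S$ is never contained in the three-dimensional subspace of traceless Hermitian operators. I would split by $\OSR(\rab)$: if it equals $4$, the single unitary $\I$ already gives a faithful set; if it equals $3$, then since $\cL(\CC^2)_{\mathrm{Herm}}$ decomposes as trivial $\oplus$ adjoint representation under $SU(2)$-conjugation, the only three-dimensional invariant subspace is the traceless part, which $S$ is not; hence some $U$ yields $USU^\dagger\neq S$, so $S+USU^\dagger=\cL(\CC^2)_{\mathrm{Herm}}$.

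The main obstacle is the case $\OSR(\rab)=2$. Writing $A_i=\alpha_i\I+\vec a_i\cdot\vec\sigma$ with $\vec a_1,\vec a_2$ linearly independent (from discord) and $(\alpha_1,\alpha_2)\neq 0$ (from the unit trace of $\ra$), let $P=\vspan(\vec a_1,\vec a_2)\subset\RR^3$ and $\vec\alpha=\alpha_1\vec a_1+\alpha_2\vec a_2\in P\setminus\{0\}$. A direct Bloch-component computation shows that for $U\leftrightarrow R\in SO(3)$ with $RP\neq P$, one has $S\cap USU^\dagger=\{0\}$ iff $(\vec\alpha-R\vec\alpha)\cdot\vec w\neq 0$, where $\vec w$ spans the line $P\cap RP$. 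I would then exhibit such an $R$ by choosing the rotation axis $\hat r$ with nonzero components both along the normal $\hat n_P$ (so that $\hat r\notin P$, forcing $RP\neq P$) and along the in-plane direction $\vec\alpha\times\hat n_P$; a small-angle expansion shows $(\vec\alpha-R\vec\alpha)\cdot\vec w$ to be, to first order in the rotation angle, a nonzero multiple of $(\hat r\cdot\hat n_P)(\hat r\cdot(\vec\alpha\times\hat n_P))$, so small rotations about such $\hat r$ yield $S+USU^\dagger=\cL(\CC^2)_{\mathrm{Herm}}$ and hence faithfulness.
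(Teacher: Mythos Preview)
Your approach matches the paper's: both directions go through the Bloch geometry of the local OSD operators via Lemma~\ref{lem:commuting} and the $SU(2)$--$SO(3)$ correspondence. The paper is much terser on the ``discord $\Rightarrow$ two unitaries suffice'' direction, simply asserting that for two non-collinear Bloch vectors ``there is a rotation $R$ of such vectors such that the resulting four vectors identify affinely independent points which span $\mathbb{R}^3$''; your case split by $\OSR$, the representation-theoretic argument for $\OSR=3$, and the perturbative construction for $\OSR=2$ supply the rigor that the paper's geometric sketch leaves implicit.

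One technical slip to patch in your $\OSR=2$ analysis: carrying out the Bloch-component computation, the intersection condition is $(\vec\gamma-R\vec\gamma)\cdot\vec w\neq 0$ where $\vec\gamma\in P$ is the vector with $\vec\gamma\cdot\vec a_i=\alpha_i$ (the Riesz representative of the trace functional on $P$), not your $\vec\alpha=\alpha_1\vec a_1+\alpha_2\vec a_2$. The two coincide only when $\vec a_1,\vec a_2$ are orthonormal in $\RR^3$, which Hilbert--Schmidt orthonormality of $A_1,A_2$ does not guarantee (it gives $\vec a_1\cdot\vec a_2=-\alpha_1\alpha_2$). The fix is painless: either work with $\vec\gamma$ throughout, or first replace $A_1,A_2$ by a basis of $S$ whose Bloch vectors are orthonormal, which is always possible since the Bloch projection $S\to P$ is a linear isomorphism. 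Your small-angle argument then goes through. A related minor point: the parenthetical ``$\hat r\notin P$, forcing $RP\neq P$'' has the implication the wrong way around (e.g.\ $\hat r=\hat n_P$ gives $RP=P$); what actually guarantees $RP\neq P$ for small nonzero angle is that $\hat r$ has a nonzero component \emph{inside} $P$, which your second requirement on $\hat r$ already secures.
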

\begin{proof}
We recall that a qubit-qudit state has zero discord on the qubit side $A$ if and only if $\rab=p\proj{a_1}_A\otimes\rho_1^B+(1-p)\proj{a_2}_A\otimes\rho_2^B$,
with $\{\ket{a_1},\ket{a_2}\}$ some orthonormal basis for $A$. While this is not necessarily the operator Schmidt decomposition, it is clear that the state only allows to reconstruct the action of a channel $\Lambda$ on
$\vspan(\{\proj{a_1},\proj{a_2}\})=\vspan(\{\I,\proj{a_1}\})$. Overall, a single additional unitary rotation $U$ allows us to reconstruct only the action of the same map on
$\vspan(\{\proj{a_1},\proj{a_2},U\proj{a_1}U^\dag,U\proj{a_2}U^\dag\})=\vspan(\{\I,\proj{a_1},U\proj{a_1}U^\dag\})$, which is not enough to tomographically reconstruct the channel. A geometric way of thinking about this is that the resulting four Bloch vectors are necessarily coplanar, and do not span affinely $\mathbb{R}^3$ (see Fig. \ref{fig:nodiscord}). 

\begin{figure}
	\begin{subfigure}[b]{0.5\textwidth}
		\includegraphics[width=0.4\textwidth]{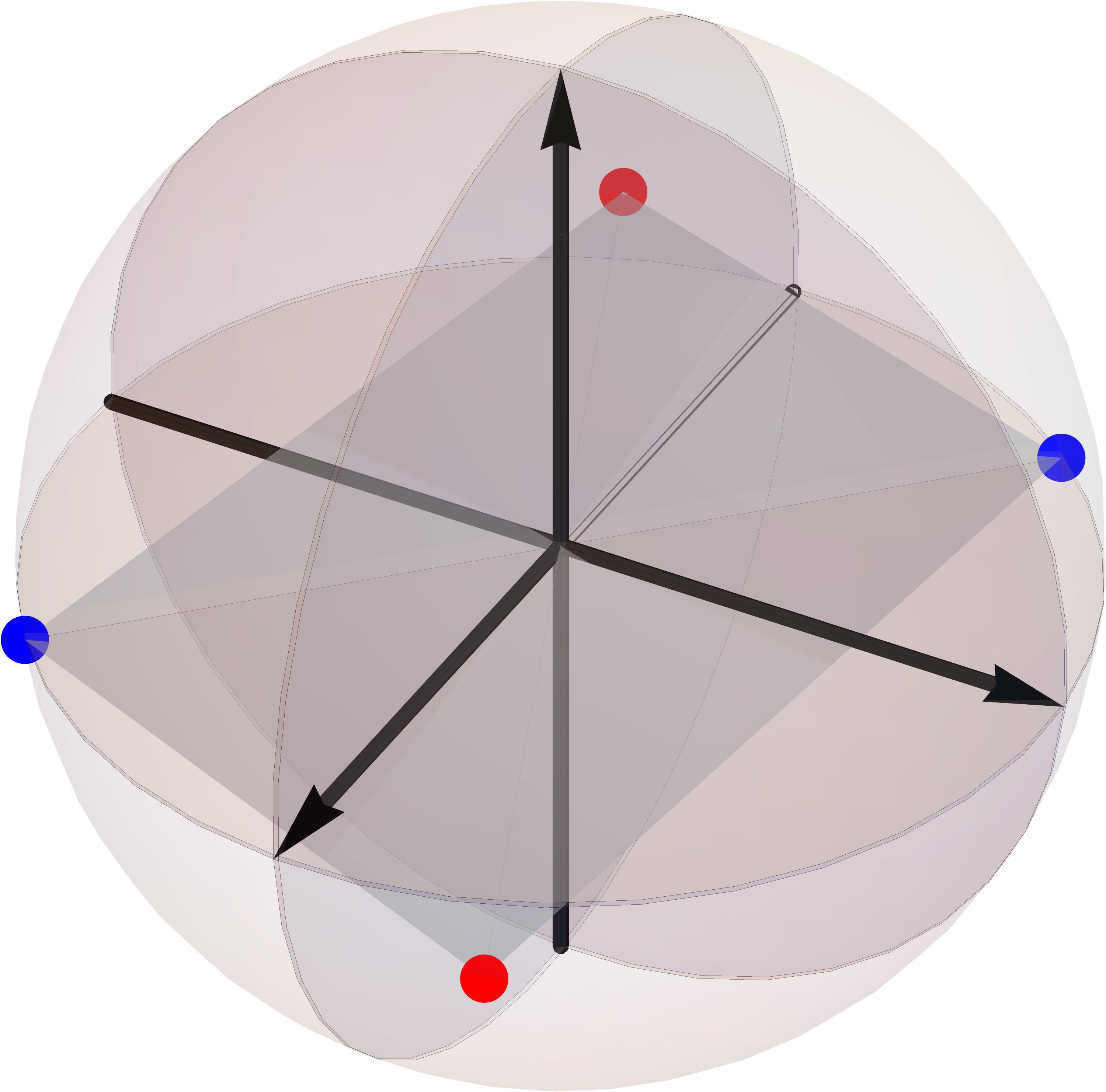}
		\caption{With no discord.}
		\label{fig:nodiscord}
	\end{subfigure}
	\vspace{10pt}
	
	\begin{subfigure}[b]{0.5\textwidth}
		\includegraphics[width=0.4\textwidth]{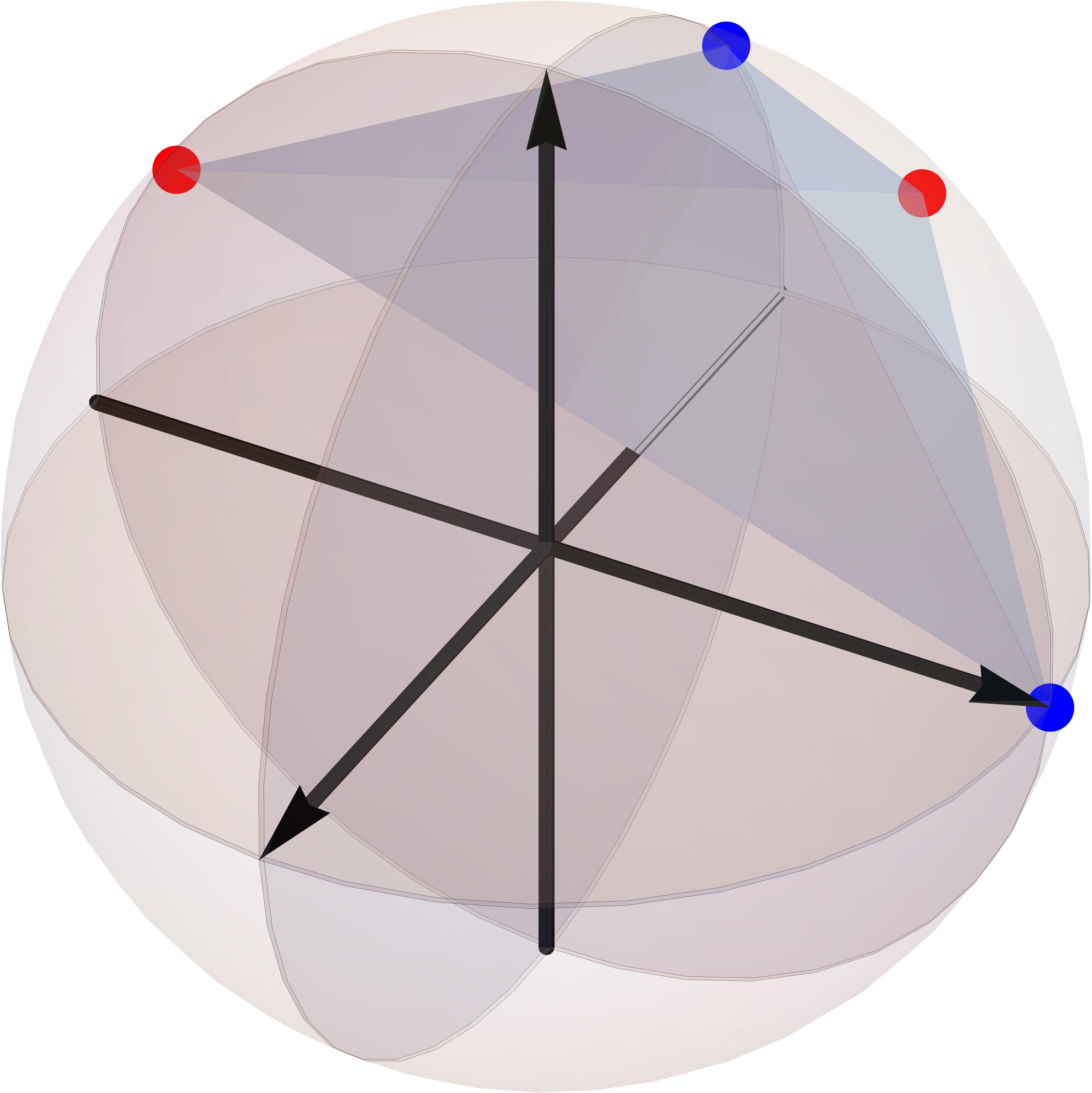}
		\caption{With discord.}
		\label{fig:withdiscord}
	\end{subfigure}
	\caption{Bloch representation of two local operators for system $A$ before (blue dots) and after (red dots) a local unitary rotation, for the case of a two-qubit state $\rho_{AB}$. (a) In the case of no discord, the blue dots correspond to the representation of two orthogonal pure states; red dots and blue dots are necessarily coplanar, independently of the unitary transformation, and hence do not span the entire three-dimensional (Bloch) space: channel tomography is not possible. (b) In the case with discord, the blue dots represent the (rescaled) Bloch component of two orthonormal (with respect to the Hilbert-Schmidt inner product) operators $A_1$ and $A_2$ that enter the OSD decomposition of $\rho_{AB}$ not trivially, and that do not commute; there is a unitary such that red dots and blue dots are not coplanar, and hence span the entire three-dimensional space: channel tomography is possible.}
\end{figure}

On the other hand, assume that $\rab$ has non-zero discord on $A$. This implies that there are some correlations, that is, that $\OSR(\rab)\geq 2$. Without loss of generality, we can assume that the OSD of $\rho$ must nontrivially contain $A_1$ and $A_2$ that do not commute, since, if all the non-trivial $A_i$'s that appear in the OSD of $\rho_{AB}$ commuted pairwise, they would all commute, and there would not be any discord. From Lemma~\ref{lem:commuting}, the Bloch vectors of $A_1$ and $A_2$ are not collinear. This means that there is a rotation $R$ of such vectors such that the resulting four vectors identify affinely independent points which span $\mathbb{R}^3$ (see Fig.\ref{fig:withdiscord}). Via the homomorphism between $SO(2)$ and $SU(3)$, the rotation $R$ corresponds to unitary rotation $U$ such that $\{\rab,U_A\otimes\I_B\rab U_A^\dag\otimes\I_B\}$ is faithful.

\end{proof}

\subsection{Mixed probe-ancilla state: examples of efficient generation of faithful sets}

In this example we present a family of mixed states of two qudits which generate faithful sets with even only two local unitaries, one being the identity. In order to construct the example we will make use of the Weyl (or generalized Pauli) basis for the space of $d\times d$ linear operators, which is given by $X^kZ^l$ with $k,l=0,\ldots,d-1$, where $X=\sum_{p=0}^{d-1}\projj{p+1}{p}$, $Z=\sum_{q=0}^{d-1}\omega^q\proj{q}$ and $\omega=e^{2\pi i/d}$ is a root of unity. Both $X$ and $Z$ are unitary, so that $X^\dagger = X^{-1}$ (similarly for $Z$). Since $X^d=Z^d=\I$, the sets $\{X^k\}_k$ and $\{Z^l\}_l$ form cyclic groups under multiplication, and we can think that the exponent is taken modulus $d$. Let $F=\frac{1}{\sqrt{d}}\sum_{k,l=0}^{d-1}w^{kl}\projj{k}{l}$ be the discrete Fourier transform unitary. One has $F X F^\dagger = Z$, $F Z F^\dagger = X^\dagger$, and the braiding relation $ZX=\omega XZ$, from which one deduces $F X^k Z^l F^\dagger=Z^kX^{-l}=\omega^{-kl}X^{-l}Z^{k}$, and that the action of $F\cdot F^\dag$ on the basis elements $X^kZ^l$ induces closed and disjoint \emph{orbits} within $\{X^kZ^l\}_{k,l}$ (up to irrelevant phases), defined as $O(k,l)=\{F^nX^kZ^lF^{\dag n}\ |\ n=0,\ldots,3\}$. Specifically, one has
\begin{align}
F X^k Z^l F^{\dagger} &= \omega^{-kl}X^{-l} Z^k,\\
F^2 X^k Z^l F^{\dagger 2} &= X^{-k} Z^{-l},\\
F^3 X^k Z^l F^{\dagger 3} &= \omega^{-kl} X^l Z^{-k},
\end{align}
while $F^4=\I$ so that obviously $F^4X^k Z^l F^{\dagger 4} = X^k Z^l$.
Such orbits contains either one, two, or four distinct elements. The only orbits with a single element are the one including the identity, corresponding to $(k,l)=(0,0)$, for both $d$ even and odd, and the one corresponding to the element $X^{d/2}Z^{d/2}$ for $d$ even. No orbit can contain exactly three distinct elements, as this would require that one of such elements is invariant under $F\cdot F^\dagger$, which is a contradiction for an orbit that contains more than one element and that is known to close necessarily under four repeated actions of $F\cdot F^\dagger$.

Consider the set $\mathcal{O}=\{O(k,l)\}$ of orbits (up to irrelevant phases). We identify $O(k,l)=O(k',l')$ if $X^kZ^l$ is in the same orbit as $X^{k'}Z^{l'}$ (up to irrelevant phases).
Consider the set $\mathcal{P}^{(1)}=\{W^{(1)}(O)|O\in\mathcal{O}\}$ composed of one Weyl-operator representative $W^{(1)}(O)$ per orbit $O$ (the exact choice of representative is irrelevant in this case). It is clear that $\cup_{i=0}^3F^i\mathcal{P}^{(1)}F^{\dagger i} = \{X^kZ^l\}_{k,l}$ up to irrelevant phases, where we have have used the shorthand notation $\Lambda\{K_m\}$ for the image $\{\Lambda[K_m]\}$ of a set $\{K_m\}$ under the action of a map $\Lambda$. In particular, $\vspan(\cup_{i=0}^3F^i\mathcal{P}^{(1)}F^{\dagger i})=\vspan(\{X^kZ^l\}_{k,l})$. Furthermore, it is also clear that there is a choice of \emph{pairs} of representatives (the two representatives may be chosen to coincide, in the case of 1-element and 2-element orbits) per orbit, forming a set $\mathcal{P}^{(2)}=\{W^{(2)}_1(O),W^{(2)}_2(O)|O\in\mathcal{O}\}$, such that $\cup_{i=0}^1F^i\mathcal{P}^{(2)}F^{\dagger i} = \{X^kZ^l\}_{k,l}$ (up to irrelevant phases).

Let us be more concrete, providing a specific choice for the set $\mathcal{P}^{(1)}$. We consider two cases, according to the parity of $d$. First, let $d=2m+1$ be odd. It is easy to verify that (up to irrelevant phases) we can pick
\begin{equation*}
\mathcal{P}^{(1)} = \{X^kZ^l\ |\ k=0,\ldots,m;l=1,\ldots,m\}\cup\{\I\}.
\end{equation*}
If $d=2m$ is instead even, then we can choose (again, up to irrelevant phases)
\begin{align*}
\mathcal{P}^{(1)}=\{X^kZ^l\ |\ k=0,\ldots,m-1;&l=1,\ldots,m\}\\
&\cup\{\I,X^mZ^m\}.
\end{align*}
Notice the the above sets $\mathcal{P}^{(1)}$ contain $(d^2+3)/4$ and $(d^2+8)/4$ elements, for odd and even dimension respectively, thus scaling approximately as $d^2/4$. 

One can similarly construct a set $\mathcal{P}^{(2)}\subset \{X^kZ^l\}_{k,l}$, with approximately $\approx d^2/2$ elements, such that $\mathcal{P}^{(2)}\cup F\mathcal{P}^{(2)}F^\dagger$ spans the entire operator space. For example, one can take $\mathcal{P}^{(2)}=\mathcal{P}^{(1)}\cup F\mathcal{P}^{(1)}F^\dagger$, where elements that are identical up to a phase factor are equated.

The issue we still have to face is how to use the facts above to construct, e.g., a state $\sigma_{AB}\in \cD(\mathbb{C}^d\otimes \mathbb{C}^d)$ with $\OSD(\sigma_{AB})\approx d^2/2$ such that $\{\sigma_{AB},U_A\otimes \I_A \sigma_{AB} U_A^\dagger\otimes \I_A\}$ is a faithful set for channel tomography on qudit $A$. This is easily done by considering the hermitian operators
\begin{align}
H_{k,l}&=\frac{1}{\sqrt{2}}\left(X^kZ^l+(X^kZ^l)^\dagger\right) \\
	&= \frac{1}{\sqrt{2}}\left(X^kZ^l+\omega^{kl}F^2(X^kZ^l)F^{\dagger 2}\right),\\ 
J_{k,l}&=\frac{1}{i\sqrt{2}}\left(X^kZ^l-(X^kZ^l)^\dag\right)\\
	&= \frac{1}{i\sqrt{2}}\left(X^kZ^l-\omega^{kl}F^2(X^kZ^l)F^{\dagger 2}\right)
\end{align}
where we have used the relation $F^2 X^k Z^l F^{\dagger 2} = X^{-k} Z^{-l}$ and the braiding relation of $X$ and $Z$.
The operators $H_{k,l}$ and $J_{k,l}$ obviously span the same operator subspace as $X^kZ^l$ and $F^2(X^kZ^l)F^{\dag 2}$. 
Furthermore, we will use the fact that, given any Hermitian operator $H$ in finite dimensions, there is $\epsilon>0$ small enough (more precisely, it is enough that $\epsilon |\lambda_{-}(H)|\leq 1$, with $\lambda_{-}(H)$ the most negative eigenvalue of $H$) such that $\I+\epsilon H$ is positive semidefinite. Thus, for example, given our choice of $\mathcal{P}^{(1)}$ above, it is clear that, for, say, $d$ odd (the even case is handled similarly), we can take
\[
\begin{split}
\sigma_{AB}
&\propto\frac{\I_{A}}{d}\otimes\frac{\I_{B}}{d} \\
&\quad+ \epsilon\sum_{k=0}^{(d-1)/2} \sum_{l=1}^{ (d-1)/2 }\left( H_{k,l}\otimes H_{k,l}+J_{k,l}\otimes J_{k,l}\right).
\end{split}
\]
with $\epsilon>0$ small enough. Then, by construction, $\left\{\sigma_{AB}\,,\,(F_A\otimes \I_B) \sigma_{AB}  (F_A\otimes \I_B)^\dagger\right\}$ is a faithful set. Notice that $\sigma_{AB}$ has OSR less or equal to $2\frac{d-1}{2}\left(\frac{d-1}{2}+1\right)+1 = \frac{d^2+1}{2}$.

\section{Conclusions}

We have introduced and analyzed some properties of a framework for process tomography assisted by correlations. Our framework interpolates between standard process tomography and ancilla-assisted process tomography, and it is based on applying local transformations on the input probe---part of an probe-ancilla bipartite system---before the probe undergoes the process to be reconstructed. In particular, we focused on determining how the correlation properties of the starting probe-ancilla state $\rho_{AB}$ affect such a number. We proved that essentially all correlations can be helpful, in the sense of reducing such a number from $d^2$ for standard process tomography to roughly $d^2/\OSR(\rho_{AB})$, with $\OSR(\rho_{AB})$ the operator Schmidt rank of $\rho_{AB}$, which is necessarily optimal. We proved that this holds true in the pure-input case even if the local transformations are restricted to be unitary. In the mixed-state case, we pointed out the role of discord in the case of qubit probes and unitary local transformations, and gave ``extreme'' examples where just one additional initial local unitary rotation suffices for process tomography, even if the initial state has operator Schmidt rank approximately $d^2/2$. It would be interesting to fully understand the mixed-state case for unitary local rotations, which appears to be related to studying and applying the adjoint representation of the unitary group, and will be investigated in future work.

\begin{acknowledgments}
	The authors thank J. Watrous for discussions. MP acknowledges support from European Union's Horizon 2020 Research and Innovation Programme under the Marie Sk{\l}odowska-Curie Action OPERACQC (Grant Agreement No. 661338), and from the Foundational Questions Institute under the Physics of the Observer Programme (Grant No. FQXi-RFP-1601). 
\end{acknowledgments}

%
%
%
%
%
%
%

%

%
%
%
%
%
%
%

\end{document}